

\documentclass[12pt]{amsart}
\usepackage{amsmath,amssymb}







\newtheorem{thm}{Theorem}[section]

\newtheorem{lem}[thm]{Lemma}
\newtheorem{cor}[thm]{Corollary}




\theoremstyle{definition}




\theoremstyle{remark}


\numberwithin{equation}{section}







\begin{document}


\title{A linear time algorithm for a variant of the max cut problem in series parallel graphs}


\author{Brahim Chaourar}
\address{Department of Mathematics and Statistics, Al Imam Mohammad Ibn Saud Islamic University (IMSIU), P.O. Box
90950, Riyadh 11623,  Saudi Arabia \\ Correspondence address: P.O. Box 287574, Riyadh 11323, Saudi Arabia}
\email{bchaourar@hotmail.com}





\begin{abstract}
Given a graph $G=(V, E)$, a connected sides cut $(U, V\backslash U)$ or $\delta (U)$ is the set of edges of E linking all vertices of U to all vertices of $V\backslash U$ such that the induced subgraphs $G[U]$ and $G[V\backslash U]$ are connected. Given a positive weight function $w$ defined on $E$, the maximum connected sides cut problem (MAX CS CUT) is to find a connected sides cut $\Omega$ such that $w(\Omega)$ is maximum. MAX CS CUT is NP-hard. In this paper, we give a linear time algorithm to solve MAX CS CUT for series parallel graphs. We deduce a linear time algorithm for the minimum cut problem in the same class of graphs without computing the maximum flow.
\end{abstract}


\maketitle




{\bf2010 Mathematics Subject Classification:} 90C27, 90C57.
\newline {\bf Key words and phrases:} maximum cut, maximum connected sides cut, linear time algorithm, series parallel graphs, minimum cut.

\section{Introduction}

Sets and their characterisitic vectors will not be distinguished. We refer to Bondy and Murty \cite{Bondy and Murty 2008} about graph theory terminolgy and facts.
\newline Given an undirected graph $G = (V, E)$ and positive weights $w_{ij} = w_{ji}$ on the edges $(i, j)\in E$, the maximum cut problem (MAX CUT) is that of finding the set of vertices $S$ that maximizes the weight of the edges in the cut $(S, V\backslash S)$ or $\delta (S)$ or $\delta (V\backslash S)$; that is, the weight of the edges with one endpoint in $S$ and the other in $V\backslash S$. The (decision variant of the) MAX CUT is one of the Karp’s original NP-complete problems \cite{Karp 1972}, and has long been known to be NP-complete even if the problem is unweighted; that is, if $w_{ij} = 1$ for all $(i, j)\in E$ \cite{Garey et al. 1976}. This motivates the research to solve the MAX CUT problem in special classes of graphs. The MAX CUT problem is solvable in polynomial time for the following special classes of graphs: planar graphs \cite{Barahona 1990, Hadlock 1975, Orlova and Dorfman 1972}, line graphs \cite{Guruswami 1999}, graphs with bounded treewidth, or cographs \cite{Bodlaender and Jansen 2000}. But the problem remains NP-complete for chordal graphs, undirected path graphs, split graphs, tripartite graphs, graphs that are the complement of a bipartite graph \cite{Bodlaender and Jansen 2000} and planar graphs if the weights are of arbitrary sign \cite{Terebenkov 1991}. Besides its theoretical importance, the MAX CUT problem has applications in circuit layout design and statistical physics \cite{Barahona et al. 1988}. For a comprehensive survey of the MAX CUT problem, the reader is referred to Poljak and Tuza \cite{Poljak and Tuza 1995} and Ben-Ameur et al. \cite{Ben-Ameur et al. 2014}. The best known algorithm for MAX CUT in planar graphs has running time complexity $O(n^{3/2} log n)$, where $n$ is the number of vertices of the graph \cite{Shih et al. 1990}. The main result of this paper is to exhibit a linear time algorithm for a special variant of MAX CUT in series parallel graphs.
\newline Let us give some definitions. Given an undirected graph $G = (V, E)$ and a subset of vertices $U$, a connected sides cut $\delta (U)$ is a cut where both induced subgraphs $G[U]$ and $G[V\backslash U]$ are connected. Special connected sides cuts are trivial cuts, i.e. cuts with one single vertex in one side. The corresponding weighted variant of MAX CUT for connected sides cuts is called MAX CONNECTED SIDES CUT problem (MAX CS CUT). It is clear that MAX CUT and MAX CS CUT are the same problem for complete graphs. Since MAX CUT is NP-hard for complete graphs (see \cite{Karp 1972}) then MAX CS CUT is NP-hard in the general case. Another motivation is that MAX CS CUT gives a lower bound for MAX CUT.
\newline A parallel closure of a graph is an induced subgraph on two vertices. A series extension of the graph $G = (V, E)$ based on the edge $e\in E$ is adding a vertex $v$ of degree 2 in the middle of $e$ in order to have two edges instead of $e$. A parallel extension of $G$ based on the edge $e$ is adding an edge $f$ having the same incident vertices as $e$. Series parallel graphs are graphs obtained by applying recursively series and/or parallel extensions starting from one edge. A series degree of a vertex $v$ in a graph $G$ is the degree of $v$ after replacing every parallel closure of $G$ by one single edge. A series labeling of the vertices of a series parallel graph is a labeling of the vertices from 0 to $n-1 = |V|-1$ starting from the first two vertices $v_0$ and $v_1$ and so on to the last added vertex. Any series parallel graph contains at least one vertex of series degree 2. So, given a vertex $v$ of series degree 2 with the two parallel closures $P_0$ and $P_1$ incident to $v$, and the two adjacent vertices $u_0$ and $u_1$ to $v$, we can contract all edges of $P_0$ (or $P_1$) and replace $v$ by $u_0$ (or $u_1$), and we obtain a new series parallel graph with a new vertex of series degree 2. Each involved graph in any step of this process is labeled $G_j, 0 \leq j\leq n-1$, with $G_{n-1} = G$ and $G_1$ is the induced subgraph on the two vertices $v_0$ and $v_1$.
\newline Let $G_1$ and $G_2$ be two graphs with $e_j$ an edge of $G_j, j = 1, 2$. The 2-sum of $G_1$ and $G_2$, denoted $G_1\oplus_e G_2$, based on the edges $e_1$ and $e_2$ is the graph obtained by identifying $e_1$ and $e_2$ on an edge $e$, and keeping $G_j/e_j, j = 1, 2$, as it is.
\newline We say that MAX CS CUT is linear for a class of graphs if there is a linear time algorithm to solve it in such class.
\newline The remaining of the paper is organized as follows: in section 2, we give a linear time algorithm for MAX CS CUT in series parallel graphs, in section 3, we prove that 2-sums preserve the linearity of MAX CS CUT. We deduce a linear time algorithm for MIN CUT in series parallel graphs in section 4, and we conclude in section 5.

\section{MAX CS CUT is linear for series parallel graphs}

MAXCSCUTSP Algorithm:
\newline Input: A series parallel graph $G = (V, E)$ with a series labeling of $V$, a positive weight function $w$ defined on $E$.
\newline Output: A $w$-maximum connected sides cut $\Omega$ in $G$.
\newline 0) Begin
\newline 1) $j := n-1$;
\newline 2) While $j > 1$ do
\newline 3) Begin
\newline 4) Let $P_0$ and $P_1$ be the two parallel closures incident to $v_j$ in $G_j$:
\newline 5) If $w(P_0) > w(P_1)$ then contract $P_1$;
\newline 6) Else: contract $P_0$;
\newline 7) $j := j-1$;
\newline 6) End of While
\newline 7) $j := 2$;
\newline 8) $\Omega := E(G_1)$;
\newline 9) While $j\leq n-1$ do
\newline 10) Begin
\newline 11) Let $P_0$ and $P_1$ the two parallel closures incident to $v_j$ in $G_j$:
\newline 12) If $w(P_0)+w(P_1)>w(\Omega)$ then $\Omega := P_0\cup P_1$;
\newline 13) $j := j+1$;
\newline 14) End of While
\newline 15) End of MAXCSCUTSP algorithm.
\newline This algorithm has two phases: Phase I (steps 1-6) and Phase II (steps 7-14). In each step, we do roughly $n$ operations, so the complexity of MAXCSCUTSP is $O(n)$, where $n=|V|$.
\begin{thm}
MAXCSCUTSP algorithm solves MAX CS CUT in series parallel graphs.
\end{thm}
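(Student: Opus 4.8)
The plan is to reduce the theorem to two facts about the chain of graphs $G_1,\dots ,G_{n-1}=G$ built in Phase~I. Write $M(H)$ for the maximum $w$-weight of a connected sides cut of a graph $H$. The first fact is the recursion
\[
M(G_j)=\max\bigl(w(\delta_{G_j}(v_j)),\,M(G_{j-1})\bigr)\qquad (2\le j\le n-1),
\]
together with $M(G_1)=w(E(G_1))$, which is clear since $G_1$ has only two vertices and hence a single connected sides cut, $E(G_1)$. The second fact is that each edge set offered to $\Omega$ in Phase~II --- namely $\delta_{G_j}(v_j)=P_0\cup P_1$, a subset of $E(G)$ because every $G_j$ is obtained from $G$ by contracting parallel closures --- is itself a connected sides cut of $G$, and likewise $E(G_1)$. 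Given these, induction on the recursion yields $M(G)=\max_{1\le j\le n-1}w(\delta_{G_j}(v_j))$ (with $\delta_{G_1}(v_1)$ read as $E(G_1)$); Phase~II returns a set $\Omega$ whose weight is exactly this maximum and which, by the second fact, is a connected sides cut of $G$ --- hence an optimal one, proving the theorem.

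For the second fact and the easy half of the recursion I would use the standard ``super-vertex'' bookkeeping. Contracting parallel closures creates no new edge and alters no weight, so $E(G_j)\subseteq E(G)$; each vertex of $G_j$ is a subset of $V(G)$; these subsets partition $V(G)$; and each induces a connected subgraph of $G$, because a super-vertex is built by repeatedly amalgamating two super-vertices joined by an edge. Let $S_j\subseteq V(G)$ be the super-vertex $v_j$ of $G_j$. No edge of $G$ between $S_j$ and $V(G)\setminus S_j$ is ever contracted, its endpoints lying in super-vertices that remain distinct throughout the process; hence $\delta_{G_j}(v_j)=\delta_G(S_j)$ as edge sets, and $\delta_G(S_j)$ is a connected sides cut of $G$ since $G[S_j]$ is connected and $G[V(G)\setminus S_j]$ is connected --- the latter because $G_j-v_j$ is connected and its super-vertices induce connected subgraphs of $G$. (That $G_j-v_j$ is connected is where one needs $G_j$ to have no cut vertex; I return to this below.) The same bookkeeping, run backwards, shows any connected sides cut of $G_{j-1}=G_j/P$ lifts to one of $G_j$ of equal weight, so $M(G_{j-1})\le M(G_j)$; and $\delta_{G_j}(v_j)$ is a trivial connected sides cut of $G_j$, so $w(\delta_{G_j}(v_j))\le M(G_j)$. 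Thus only $M(G_j)\le\max(w(\delta_{G_j}(v_j)),M(G_{j-1}))$ remains.

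For this, let $B=\delta_{G_j}(U)$ be a connected sides cut of $G_j$. Let $P$ be the closure at $v_j$ contracted in Phase~I and $Q$ the other, so $w(Q)\ge w(P)$ and $G_{j-1}=G_j/P$; say $v_j\in U$, with neighbours $u_P$ across $P$ and $u_Q$ across $Q$. If $u_P\in U$, then $P\cap B=\emptyset$, so contracting $P$ --- which merges $v_j$ into $u_P$, both on the $U$-side --- leaves $B$ unchanged and shows it to be a connected sides cut of $G_{j-1}$ of the same weight. If $u_P,u_Q\notin U$, then $v_j$ is isolated in $G_j[U]$, forcing $U=\{v_j\}$ and $B=\delta_{G_j}(v_j)$. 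The remaining case, $u_P\notin U$ and $u_Q\in U$, is the core of the proof: now $P\subseteq B$, $Q\cap B=\emptyset$, and I would move $v_j$ to the other side, replacing $B$ by $B'=\delta_{G_j}(U\setminus\{v_j\})=(B\setminus P)\cup Q$. This $B'$ is again a connected sides cut --- deleting $v_j$ from $G_j[U]$ keeps it connected because $u_Q$ was its only neighbour there, and adding $v_j$ to the opposite side keeps that side connected through $u_P$ --- and $w(B')=w(B)-w(P)+w(Q)\ge w(B)$; now $P\cap B'=\emptyset$, so contracting $P$ makes $B'$ a connected sides cut of $G_{j-1}$. In all cases $w(B)\le\max(w(\delta_{G_j}(v_j)),M(G_{j-1}))$. (When $w(P)=w(Q)$ Phase~I contracts $P_0$; the analysis is symmetric in the two closures, so this changes nothing.)

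Two points will need care. The main one is the exchange step of the last case: one must check that \emph{both} sides of the rerouted cut remain connected --- which uses essentially that $v_j$ has series degree $2$, hence just one neighbour on the side to which it moves --- and one must observe that it is precisely the \emph{lighter} closure $P$ that can always be absorbed into a contraction, which is exactly what justifies Phase~I's greedy choice. The second point is that the Phase~II candidates are connected sides cuts of $G$ at all: this requires $G_j-v_j$ to be connected, i.e.\ $G_j$ to have no cut vertex, which holds when $G$ is $2$-connected but not in general (for the path on three vertices, the cut around the middle vertex has a disconnected side). So the clean --- and, I expect, the intended --- setting is $2$-connected series parallel $G$, in which every $G_j$ inherits $2$-connectivity because contracting the closure at a series-degree-$2$ vertex preserves it. Everything else --- the super-vertex bookkeeping, the lifting of cuts along contractions, and the final induction --- is routine.
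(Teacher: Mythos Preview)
Your argument is correct and follows the same recursion the paper proves, $w(\Omega_j)=\max\{w(\Omega_{j-1}),\,w(P_0\cup P_1)\}$, but you are considerably more careful: the paper's two cases tacitly assume that any connected sides cut of $G_j$ other than $P_0\cup P_1$ is already one of $G_{j-1}=G_j/P_0$, whereas you correctly isolate the case $P\subseteq\Omega$, $Q\not\subseteq\Omega$ and dispose of it by the exchange $B\mapsto B'$ --- which is exactly the content hidden behind the paper's phrase ``since $w(P_0)<w(P_1)$, we have only two cases.'' Your super-vertex bookkeeping (showing the Phase~II candidates are connected sides cuts of $G$ itself) and your $2$-connectivity caveat are both absent from the paper's proof; the latter is a genuine issue with the theorem as stated, since the path on three vertices is a series parallel graph under the paper's definition.
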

\begin{proof}
The summary of the algorithm is as follows:
MAXCSCUT chooses a vertex v with series degree 2 (step 4) and contract the less weighted parallel closure incident to v (steps 5 and 6). And so on the resulted graph until it reaches $G_1$ , the starting single parallel closure (Phase I). In $G_1$, the $w$-maximum connected sides cut is $E(G_1)$ (step 8). After that, it goes in the reverse path (Phase II): the $w$-maximum connected sides cut is either the trivial cut based on the current vertex $v_j$ with series degree 2 or the current computed connected sides cut (step 12).
Let $v_j$ be the chosen vertex with series degree 2 in $G_j$, $P_0$ and $P_1$ the two parallel closures incident to $v_j$. Without loss of generality, we can suppose that $w(P_0)<w(P_1)$ and $G_{j-1}=G_j/P_0$. Let $\Omega_j$ be the w-maximum connected sides cut in $G_j, 1\leq j\leq n-1$. It suffices to prove that $w(\Omega_j) = Max \{w(\Omega_{j-1}), w(P_0\cup P_1)\}$.
\newline Let $\Omega$ be a connected sides cut in $G_j$ distinct from $P_0\cup P_1$. Since $w(P_0)<w(P_1)$, we have only two cases:
\newline \textbf{Case 1:} $P_1\subseteq \Omega$ then $\Omega$ is a connected sides cut in $G_{j-1} = G_j/P_0$ containing $P_1$. And vice versa, any connected sides cut in $G_{j-1} = G_j/P_0$ containing $P_1$ is a connected sides cut in $G_j$ containing $P_1$.
\newline \textbf{Case 2:} $P_1\nsubseteq \Omega$ then $\Omega$ is a connected sides cut in $G_{j-1}=G_j/P_0$ not containing $P_1$. And vice versa, any connected sides cut in $G_{j-1}=G_j/P_0$ not containing $P_1$ is a connected sides cut in $G_j$ not containing $P_1$.
\newline So the connected sides cuts candidates for the $w$-maximum connected sides cut in $G_j$ and $G_{j-1}$ are the same, except $P_0\cup P_1$.
\end{proof}
Note that MAXCSCUT algorithm solves MAX CS CUT in series parallel graphs even for arbitrary sign weight functions.

\section{2-sums preserve linearity of MAX CS CUT}

Let $\mathcal C(G)$ be the class of connected sides cuts of $G$. We need the following lemma.
\begin{lem}
$\mathcal C(G_1\oplus_e G_2)=\{ \Omega_j\in \mathcal C(G_j) : e_j\notin \Omega_j, j=1, 2\}\cup \{ \Omega_1\oplus_e \Omega_2 : \Omega_j\in \mathcal C(G_j)$ and $e_j\in \Omega_j, j = 1, 2\}$.
\end{lem}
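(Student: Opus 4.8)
The plan is to prove the two inclusions of the set equality by transporting connectivity and cut information back and forth between $H:=G_1\oplus_e G_2$ and the two pieces $G_1,G_2$. First I would fix notation: let $a,b$ be the endpoints of the glued edge $e$, so that $V(H)=V(G_1)\cup V(G_2)$ with $V(G_1)\cap V(G_2)=\{a,b\}$ and $E(H)=(E(G_1)\setminus\{e_1\})\cup(E(G_2)\setminus\{e_2\})\cup\{e\}$; I take $G_1$ and $G_2$ to be connected, which holds in the intended application to series parallel graphs. I write $\overline{W}$ for the complement of a vertex set $W$, and for $\Omega_j\in\mathcal C(G_j)$ with $e_j\in\Omega_j$ I read $\Omega_1\oplus_e\Omega_2$ as $(\Omega_1\setminus\{e_1\})\cup(\Omega_2\setminus\{e_2\})\cup\{e\}$.

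The technical core is a gluing observation about induced subgraphs of $H$. For $W\subseteq V(H)$ put $W_i=W\cap V(G_i)$. If $W\cap\{a,b\}=\emptyset$, there are no $H$-edges between $V(G_1)\setminus\{a,b\}$ and $V(G_2)\setminus\{a,b\}$, so $H[W]$ is connected iff $W$ lies inside one piece and induces a connected subgraph there. If $W\cap\{a,b\}\neq\emptyset$, then $H[W]$ is the union of $G_1[W_1]$ and $G_2[W_2]$, these overlapping exactly in $W\cap\{a,b\}$ and sharing the edge $e$ when $\{a,b\}\subseteq W$, whence $H[W]$ is connected iff both $G_1[W_1]$ and $G_2[W_2]$ are connected. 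The only delicate implication is that connectivity of $H[W]$ forces connectivity of each $G_i[W_i]$: given a walk in $H[W]$ between two vertices of $W_1$, every maximal excursion of the walk into $V(G_2)\setminus\{a,b\}$ must enter and leave through $W\cap\{a,b\}$, and I would replace each such excursion either by the single vertex at which it enters and leaves, or, if it enters at $a$ and leaves at $b$ or vice versa, by the edge $e$; this produces a walk entirely inside $G_1[W_1]$.

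Given the gluing observation, both inclusions reduce to a case analysis on where $a$ and $b$ sit. For ``$\subseteq$'', take $\Omega=\delta_H(U)\in\mathcal C(H)$. If $e\notin\Omega$ then $a,b$ are on the same side, say both in $U$; then $\overline{U}$ misses $\{a,b\}$, so by the first part of the observation $\overline{U}$ lies inside one piece, say $V(G_1)$, and then $\Omega=\delta_{G_1}(U\cap V(G_1))$ is a connected sides cut of $G_1$ not containing $e_1$, the ``connected sides'' property coming from the observation applied to $U$. If $e\in\Omega$ then $a,b$ are separated; writing $U_i=U\cap V(G_i)$, one checks directly that $\Omega=\delta_{G_1}(U_1)\oplus_e\delta_{G_2}(U_2)$ and that each $\delta_{G_i}(U_i)$ is a connected sides cut of $G_i$ containing $e_i$, using the observation for both $U$ and $\overline{U}$. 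For ``$\supseteq$'', given $\Omega_1=\delta_{G_1}(U_1)\in\mathcal C(G_1)$ with $e_1\notin\Omega_1$ and (WLOG) $a,b\in U_1$, I would take $U=U_1\cup V(G_2)$; then $\delta_H(U)=\Omega_1$, $H[U]$ is connected by the observation (here connectivity of $G_2$ enters), and $H[\overline{U}]=G_1[\overline{U_1}]$ is connected. Given $\Omega_j=\delta_{G_j}(U_j)\in\mathcal C(G_j)$ with $e_j\in\Omega_j$, after orienting the two cuts so that $a\in U_1\cap U_2$, I would take $U=U_1\cup U_2$ and verify, via the observation applied to $U$ and to $\overline{U}$, that $\delta_H(U)=\Omega_1\oplus_e\Omega_2$ lies in $\mathcal C(H)$.

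The main obstacle is exactly the ``only if'' direction of the gluing observation: a path in $H$ may oscillate between the two pieces, so one cannot simply restrict it to a piece; the resolution is the excursion-rerouting trick above, available precisely because the two pieces meet only in the endpoints of $e$ and because $e$ itself can be used to short-circuit an excursion. The remainder is careful but routine bookkeeping about the positions of $a$ and $b$ and about which edges of $H$ fall into a given cut.
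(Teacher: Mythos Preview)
The paper states this lemma without proof, so there is nothing to compare your argument against; your proposal is in fact more than the paper offers. Your two-inclusion strategy via the gluing observation is correct: the key nontrivial step---that connectivity of $H[W]$ with $\{a,b\}\cap W\neq\emptyset$ forces connectivity of each $G_i[W_i]$---is handled properly by the excursion-rerouting trick, and the subsequent case analysis on whether $e$ lies in the cut is routine and accurate. Your reading of the 2-sum (keeping the identified edge $e$) and of $\Omega_1\oplus_e\Omega_2$ as $(\Omega_1\setminus\{e_1\})\cup(\Omega_2\setminus\{e_2\})\cup\{e\}$ is consistent with the paper's (somewhat loose) definition and with how the lemma is used in the sequel; note that if one instead adopts the convention where $e$ is deleted in the 2-sum, the same argument goes through with the obvious cosmetic changes, since in every place you invoke $e$ to short-circuit an excursion you may equally well invoke a path through the other piece $G_{3-i}[W_{3-i}]$.
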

It follows that a $w$-maximum connected sides cut in $G_1\oplus_e G_2$ is one of the three following connected sides cuts:
\newline (cases 1-2) one of the two $w$-maximum connected sides cuts in $G_j$ which does not contain $e_j, j = 1, 2$,
\newline (case 3) or the 2-sum of the $w$-maximum connected sides cuts containing $e_j, j = 1, 2$.
\newline To find a $w$-maximum connected sides cut in $G_j$ which does not contain $e_j, j = 1, 2$ (case 2), we have to contract $e_j$. We need then to perform at most $c(n_j-1)$ operations, where $c$ is the linearity coefficient and $n_j, j=1, 2$ is the number of vertices of $G_j$ (by induction).
\newline To find $\Omega_1\oplus_e \Omega_2$ (case 3), we have to put $w(e_j), j = 1, 2$, as big as possible, e.g. sum of the positive weights of all edges, and find $\Omega_j, j = 1, 2$. In this case, we need to perform at most  $c(n_1+n_2)$ operations (by induction).
\newline So we have to compute MAX CS CUT twice in each graph and compare three cuts. The total number of operations is bounded then by $2c(n_1+n_2-1)=2c(n-1)$, where $n$ is the number of vertices of $G_1\oplus_e G_2$. So linearity of the problem is preserved.

\section{MIN CUT is linear for series parallel graphs}

MINCUTSP Algorithm:
\newline Input: A series parallel graph $G = (V, E)$ with a series labeling of $V$, a positive weight function $w$ defined on $E$.
\newline Output: A $w$-minimum connected sides cut $\Omega$ in $G$.
\newline We keep the same steps as MAXCSCUTSP algorithm except the following changes in two steps:
\newline 5) If $w(P_0) < w(P_1)$ then contract $P_1$;
\newline 12) If $w(P_0)+w(P_1)<w(\Omega)$ then $\Omega := P_0\cup P_1$;
\newline Since this algorithm is similar to MAXCSCUTSP, then its complexity is $O(n)$, where $n=|V|$.
\newline And it is not difficult to see, similarly to MAXCSCUTSP, that MINCUTSP gives the minimum weighted connected sides cut in a series parallel graph without computing the maximum flow.
\newline We can conclude with the following result.
\begin{thm}
Given a connected graph $G=(V, E)$ and a positive weight function $w$ defined on $E$. Then any $w$-minimum cut is a connected sides cut of $G$.
\end{thm}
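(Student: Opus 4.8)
The plan is to argue by contradiction: assume a $w$-minimum cut $\delta(U)$ has a disconnected side, and exhibit a cut of strictly smaller weight. First I would record two conventions/consequences that drive the argument: a cut $(U,V\setminus U)$ requires $\emptyset\neq U\subsetneq V$, and since $G$ is connected, $\delta(S)\neq\emptyset$ for every such $S$; as $w$ is strictly positive this gives $w(\delta(S))>0$ for every cut $\delta(S)$.

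Now suppose $\delta(U)$ is a $w$-minimum cut but, say, $G[U]$ is disconnected. Let $U_1$ be the vertex set of one connected component of $G[U]$ and $U_2=U\setminus U_1$; then $U_1,U_2$ are nonempty and no edge of $G$ joins $U_1$ to $U_2$. The key observation is that $\delta(U)=\delta(U_1)\cup\delta(U_2)$, a disjoint union: every edge leaving $U$ has exactly one endpoint in $U$, hence in exactly one of $U_1,U_2$; and conversely, because there is no $U_1$–$U_2$ edge, every edge leaving $U_1$ must also leave $U$ (likewise for $U_2$). Since $U_1\subseteq U\subsetneq V$ and $U_1\neq\emptyset$, the set $\delta(U_1)$ is itself a cut, so $w(\delta(U_1))>0$; therefore $w(\delta(U_1))=w(\delta(U))-w(\delta(U_2))<w(\delta(U))$, contradicting the minimality of $\delta(U)$. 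Hence $G[U]$ is connected.

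Finally, applying the same argument to $V\setminus U$ (which is again a proper nonempty subset of $V$ defining the same cut) shows $G[V\setminus U]$ is connected as well, so $\delta(U)$ is a connected sides cut, as claimed. I do not expect a substantial obstacle here: the proof is short, and the only points needing care are the conventions that both sides of a cut are nonempty and that connectedness of $G$ forces every cut to be nonempty — this is precisely what lets strict positivity of $w$ turn the decomposition $\delta(U)=\delta(U_1)\cup\delta(U_2)$ into a strict decrease in weight.
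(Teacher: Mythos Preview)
Your proof is correct and follows essentially the same route as the paper's: take a connected component $U_1$ of a disconnected side, use the absence of $U_1$--$U_2$ edges to split $\delta(U)$ as the disjoint union $\delta(U_1)\cup\delta(U_2)$, and invoke connectedness of $G$ plus strict positivity of $w$ to produce a strictly lighter cut. The paper exhibits $\delta(U\setminus U_1)$ as the lighter cut rather than your $\delta(U_1)$, but the two choices are symmetric and the argument is the same.
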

\begin{proof}
Let $\delta (U)$ be a cut with $G[U]$ disconnected. It suffices to prove that $\delta (U)$ is not a $w$-minimum cut. Let $G[U_1]$ be one connected component of $G[U]$. Since $G$ is connected, then $w(V\backslash U, U_1)>0$ (i.e. there are edges between $V\backslash U$ and $U_1$). It follows that $w(\delta (U\backslash U_1))=w(\delta (U))-w(V\backslash U, U_1)<w(\delta (U))$.
\end{proof}
Another consequence of Lemma 3.1 and Theorem 4.1 is the following corollary.
\begin{cor}
2-sums preserves the linearity of MIN CUT.
\end{cor}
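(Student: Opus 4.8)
The plan is to reduce the statement to the corresponding fact about connected sides cuts, which was already established in Section 3. By Theorem 4.1, for a connected graph with positive weights every $w$-minimum cut is a connected sides cut; hence a $w$-minimum cut and a $w$-minimum connected sides cut (the output of an algorithm of MINCUTSP type) coincide, and it suffices to show that $2$-sums preserve the linearity of the problem of computing a $w$-minimum connected sides cut. As already remarked after the MINCUTSP algorithm, this problem is well posed and stays linear for weight functions of arbitrary sign, a fact I will use freely.

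The core of the argument is a verbatim analogue of Section 3. First I would invoke Lemma 3.1: every connected sides cut of $G_1\oplus_e G_2$ is either a connected sides cut of some $G_j$ avoiding $e_j$, or a $2$-sum $\Omega_1\oplus_e\Omega_2$ of connected sides cuts $\Omega_j$ of $G_j$ with $e_j\in\Omega_j$. Consequently a $w$-minimum connected sides cut of $G_1\oplus_e G_2$ is the lightest of three candidates: (i) a $w$-minimum connected sides cut of $G_1$ not containing $e_1$, (ii) a $w$-minimum connected sides cut of $G_2$ not containing $e_2$, and (iii) the $2$-sum of the $w$-minimum connected sides cuts of $G_1$ and $G_2$ that do contain $e_1$ and $e_2$. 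To obtain (i) (resp. (ii)) I would contract $e_1$ (resp. $e_2$) and run the assumed linear-time algorithm on $G_1/e_1$ (resp. $G_2/e_2$), costing at most $c(n_j-1)$ operations, where $c$ is the linearity coefficient (by induction). To obtain (iii) I would reset $w(e_j)$ to a value small enough — a sufficiently large negative number, e.g.\ $-1$ minus the sum of the absolute values of all other weights — that every $w$-minimum connected sides cut of $G_j$ is forced to contain $e_j$; running the assumed algorithm on the two modified graphs then yields $\Omega_1,\Omega_2$ in at most $c(n_1+n_2)$ operations, and $w(\Omega_1\oplus_e\Omega_2)$ is recovered from $w(\Omega_1)+w(\Omega_2)$ after discarding the artificial contribution of $e$. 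Comparing the three candidates adds $O(1)$, so the total work is bounded by $2c(n_1+n_2-1)=2c(n-1)$, where $n$ is the number of vertices of $G_1\oplus_e G_2$, i.e.\ exactly the bound of Section 3; linearity is preserved. Alternatively, one may just apply the Section 3 result to the weight function $-w$ and invoke Theorem 4.1, since both MAXCSCUTSP and the $2$-sum reduction are valid for signed weights.

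I expect no genuine obstacle here. The only routine points to check are the operation counts — that contracting an edge and calling the assumed algorithm stays within $c(n_j-1)$, and that the two calls plus the final comparison sum to $2c(n-1)$ — together with the verification that the weight reset in candidate (iii) really forces $e_j$ into every minimum connected sides cut of the modified $G_j$. The one new ingredient beyond Section 3 is the appeal to Theorem 4.1 that replaces "minimum cut'' by "minimum connected sides cut'', after which the proof is a transcription of the MAX CS CUT argument with the inequality reversed.
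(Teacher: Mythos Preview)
Your proposal is correct and follows exactly the approach the paper intends: the paper offers no separate proof for this corollary beyond the sentence ``Another consequence of Lemma 3.1 and Theorem 4.1 is the following corollary,'' and your write-up simply unpacks that sentence, invoking Theorem~4.1 to reduce MIN CUT to minimum connected sides cut and then replaying the Section~3 argument via Lemma~3.1 with the inequality reversed. Your alternative one-line route (apply Section~3 to $-w$) is a legitimate shortcut in the same spirit.
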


\section{Conclusion}
We have introduced a new variant of MAX CUT: MAX CS CUT, which is also NP-hard. We have provided two linear time algorithms for MAX CS CUT and MIN CUT, respectively, in series parallel graphs.
We have proved that 2-sums preserve the linearity of MAX CS CUT and MIN CUT.
Further directions are to study MAX CS CUT in larger classes of graphs than series parallel graphs.

\noindent{ \bf Acknowledgements}
The author  is grateful to the deanship of Scientific Research at Al Imam Mohammad Ibn Saud Islamic University (IMSIU) for supporting financially this research under the grant No 331203.


\begin{thebibliography}{1}

\bibitem{Barahona et al. 1988}
F. Barahona, M. Grötschel, M. Jünger, and G. Reinelt (1988), {\em An application of combinatorial optimization to statistical physics and circuit layout design}, Operations Research 36: 493-513.

\bibitem{Barahona 1990}
F. Barahona, (1990), {\em Planar multicommodity flows, max cut, and the Chinese postman problem}, in: Polyhedral Combinatorics, Proceedings DIMACS Workshop, Morristown, New Jersey, 1989, W. Cook, P.D. Seymour (eds.) [DIMACS Series in Discrete Mathematics and Theoretical Computer Science, Volume 1], American Mathematical Society, Providence, Rhode Island: 189–202.

\bibitem{Ben-Ameur et al. 2014}
W. Ben-Ameur, A. R. Mahjoub, and J. Neto, {\em The maximum cut problem}, in: Paradigms of Combinatorial Optimization, Problems and New Approaches, 2nd edition, J. Wiley and Sons, USA, V. T. Paschos (ed.), 2014.

\bibitem{Bodlaender and Jansen 2000}
H. L. Bodlaender and K. Jansen (2000), {\em On the Complexity of the Maximum Cut Problem}, Nordic Journal of Computing 7(1): 14-31.

\bibitem{Bondy and Murty 2008}
J. A. Bondy and U. S. R. Murty (2008), {\em Graph Theory with Applications}, Elsevier, New York.

\bibitem{Garey et al. 1976}
M. R. Garey, D. S. Johnson, and L. Stockmeyer (1976), {\em Some simplified NP-complete graph problems}, Theoretical Computer Science 1: 237-267.

\bibitem{Guruswami 1999}
V. Guruswami (1999), {\em Maximum cut on line and total graphs}, Discrete Applied Mathematics 92 (2-3): 217-221.

\bibitem{Hadlock 1975}
F. Hadlock (1975), {\em Finding a maximum cut of a planar graph in polynomial time}, SIAM Journal on Computing 4: 221-225.

\bibitem{Karp 1972}
R. M. Karp (1972), {\em Reducibility among combinatorial problems}, in: Complexity of Computer Computations, Miller and Thatcher, Plenum Press: 85-104.

\bibitem{Orlova and Dorfman 1972}
G. I. Orlova and Y. G. Dorfman (1972), {\em Finding the maximal cut in a graph}, Engineering Cybernetics: 502-506.

\bibitem{Poljak and Tuza 1995}
Poljak and Tuza (1995), {\em The max-cut problem -- a survey}, in: Special Year on Combinatorial Optimization, W. Cook, L. Lovasz and P. Seymour (eds.), DIMACS series in Discrete Mathematics and Theoretical Computer Science, American Mathematical Society, 1995.

\bibitem{Shih et al. 1990}
W.-K. Shih, S. Wu, and Y. S. Kuo (1990), {\em Unifying maximum cut and minimum cut of a planar graph}, IEEE Transactions on Computers 39 (5): 694–697.

\bibitem{Terebenkov 1991}
A. P. Terebenkov (1991), {\em NP-completeness of maximum-cut and cycle-covering problems for a planar graph}, Cybernetics and Systems Analysis 27 (1): 16-20. 



\end{thebibliography}
\end{document}